\newtheorem{theorem}{\bf{Theorem}}[section]
\newtheorem{cor}[theorem]{Corollary}
\newtheorem{lem}[theorem]{Lemma}
\theoremstyle{plain}
\newenvironment{definition}[1][Definition]{\begin{trivlist}
\item[\hskip \labelsep {\bfseries #1}]}{\end{trivlist}}
\newcounter{assump} 
\newcounter{rem} 
\newenvironment{remark}[1][Remark \arabic{rem}]{\refstepcounter{rem} \begin{trivlist} 
\item[\hskip \labelsep {\bfseries #1}]}{\end{trivlist}}
\newcounter{propno} 
\newcounter{algno} 
\title{\LARGE \bf
Structural Robustness to Noise in Consensus Networks: Impact of Degrees and Distances, Fundamental Limits, and Extremal Graphs
}
\author{ Yasin Yaz{\i}c{\i}o\u{g}lu, \IEEEmembership{Member, IEEE},  Waseem Abbas, \IEEEmembership{Member, IEEE}, and Mudassir Shabbir
\thanks{Yasin~Yaz{\i}c{\i}o\u{g}lu is with the Department of Electrical and Computer Engineering at the University of Minnesota, Minneapolis, MN, USA (e-mail: ayasin@umn.edu).} 
\thanks{Waseem~Abbas is with the Department of Electrical Engineering and Computer Science, Vanderbilt University, Nashville, TN, USA (e-mail: waseem.abbas@vanderbilt.edu).}
\thanks{Mudassir~Shabbir is with the Computer Science Department at the Information Technology University, Lahore, Punjab, Pakistan (e-mail: mudassir@rutgers.edu).}

\thanks{Some preliminary results of this paper were presented in the $58^{\text{th}}$ IEEE Conference on Decision and Control (see  \cite{Yasin19CDC}).}

}
\begin{document}

\maketitle

\begin{abstract}

We investigate how the graph topology influences the robustness to noise in undirected linear consensus networks. We measure the structural robustness by using the smallest possible value of steady state population variance of states under the noisy consensus dynamics with edge weights from the unit interval. We derive tight upper and lower bounds on the structural robustness of networks based on the average distance between nodes and the average node degree. Using the proposed bounds, we characterize the networks with different types of robustness scaling under increasing size. Furthermore, we present a fundamental trade-off between the structural robustness and the average degree of networks.  While this trade-off implies that a desired level of structural robustness can only be achieved by graphs with a sufficiently large average degree, we also show that there exist dense graphs with poor structural robustness.  We then show that, random $k$-regular graphs (the degree of each node is $k$) with $n$ nodes typically have near-optimal structural robustness among all the graphs with size $n$ and average degree $k$  for sufficiently large $n$ and $k$. We also show that when $k$ increases properly with $n$, random $k$-regular graphs maintain a structural robustness within a constant factor of  the complete graph's while also having the minimum average degree required for such robustness.

\begin{IEEEkeywords}
Networked control systems, decentralized control, network theory, robustness.
\end{IEEEkeywords}

\end{abstract}


\section{Introduction}
Consensus networks, where the state of each node approaches a weighted average of the states of adjacent nodes, are used to model the diffusive couplings numerous natural and engineered systems. These systems typically operate in the face of various disturbances such as measurement/process noise, communication delays, component failures, misbehaving nodes, or malicious attacks (e.g., \cite{Young10, Bamieh12,Leblanc13,Shi13,yazicioglu2017resilient}).
Accordingly, a central question regarding such networks is how they behave in the face of disturbances.


This paper is focused on the robustness of undirected consensus networks to noisy interactions. In such networks, each edge is endowed with some positive weight denoting the coupling strength between the corresponding nodes. We consider a setting with additive process noise, where the state of each node is attracted towards the weighted average of the states of its neighbors plus some independent and identically distributed (i.i.d.) white Gaussian noise with zero mean and unit covariance. We use the expected steady state population variance of states, which is a variant of the $\mathcal{H}_2$-norm of the system with the output defined as the deviation of nodes from global average, as the measure of vulnerability to noise. Similar dynamics were considered in \cite{Young10,Bamieh12} and it was shown that for any network with a given allocation of edge weights, the expected steady state variance can be expressed in terms of the weighted Laplacian eigenvalues. Some tight bounds on this robustness measure were presented in \cite{Siami16,Siami17}. 


In this paper, we introduce the notion of structural robustness to noise,  which extends the related measures in the literature (e.g., \cite{Young10,Bamieh12}) and assess each network based on the smallest value of expected steady state variance that can be attained under the noisy consensus dynamics with edge weights from the unit interval.  We show that two simple graph measures, namely the average distance between nodes and the average node degree, define tight bounds on the proposed measure of structural robustness. We then use these bounds to obtain some fundamental limits and trade-offs regarding structural robustness and to characterize graphs with extremal robustness. The main contributions of this paper are as follows:


\begin{itemize}

\item  We show that the average distance between nodes and the average node degree define tight upper and lower bounds on the proposed measure of structural robustness to noise. Using these bounds, we also provide a characterization of networks with extremal scaling of structural robustness, i.e., graph families such that the structural robustness gets arbitrarily worse (e.g., path graph) or arbitrarily better (e.g., complete graph) as the network size increases.

\item  We show that there is a fundamental trade-off between the structural robustness and the  edge-sparsity of networks. We express this trade-off in terms of tight bounds on the ratio of structural robustness of any given graph to the structural robustness of the complete graph (best) and the star graph (best among the connected graphs with minimum average degree).  While these bounds imply that a desired level of structural robustness can only be achieved by graphs with a sufficiently large average degree, we also show that there exist graphs whose robustness becomes arbitrarily worse with increasing size despite having an arbitrarily large average degree.  

\item  We show that, for sufficiently large $n$ and $k$, random $k$-regular graphs with $n$ nodes typically have near-optimal structural robustness among the graphs with size $n$ and average degree $k$. Moreover, when $k$ increases properly with size, random $k$-regular graphs maintain a structural robustness within a constant factor of the complete graph's while also having the minimum average degree required for such robustness.

\end{itemize}


The organization of this paper is as follows: Section \ref{prelim} provides some graph theory preliminaries. Section \ref{main} presents our main results. Section \ref{sims} provides the numerical simulations. Finally, Section \ref{conclusion} concludes the paper.

\section{ Preliminaries}
\label{prelim}

\subsection{Notation}
We use $\mathbb{R}$ and $\mathbb{R}_{+}$ to denote the set of real numbers and positive real numbers, respectively. For any finite set $A$ with cardinality $|A|$, we use $\mathbb{R}^{|A|}$ (or $\mathbb{R}_{+}^{|A|}$) to denote 
the space of real-valued (or positive-real-valued) ${|A|-\mbox{dimensional}}$ vectors.  For any pair of vectors $x,y \in \mathbb{R}^{|A|}$, we use $x \leq y$ (or $x<y$) to denote the element-wise inequalities, i.e., $x_i \leq y_i$ (or $x_i < y_i$) for all $i=1,2, \hdots, |A|$. 
 The all-ones and all-zeros vectors, their sizes being clear from the context, will be denoted by $\bold{1} \in \mathbb{R}^n$ and $\bold{0} \in \mathbb{R}^n$.  
We use $\mathcal{O}(\cdot)$ to denote the Big O notation.
\subsection{Graph Theory Basics}
A graph $\mathcal{G}=(V,E)$ consists of a node set ${V=\{1,2,\hdots,n\}}$ and an edge set $E \subseteq V \times V$. For an undirected graph, each edge is represented as an unordered pair of nodes. For each $i \in V$, let $\mathcal{N}_i$ denote the \emph{neighborhood} of $i$, i.e., $\mathcal{N}_i = \{ j \in V \mid (i,j) \in E \}$. A \emph{path} between a pair of nodes $i,j \in V$ is a sequence of distinct nodes $\{i, \hdots, j\}$ such that each pair of consecutive nodes are linked by an edge. For any node $i$, the number of nodes in its neighborhood, $|\mathcal{N}_i|$, is called its degree, $d_i$. Accordingly, the average node degree is 
 \begin{equation}
\label{degav2}
\tilde{d}(\mathcal{G})=\frac{1}{n}\sum_{i=1}^nd_i=\frac{2|E|}{n}.
\end{equation}

The \emph{distance} between any two nodes $i$ and $j$, which is denoted by $\delta_{ij}$, is equal to the number of edges on the shortest path between those nodes. The maximum distance between any two nodes, $\max_{i,j \in V} \delta_{ij}$ is known as the \emph{diameter} of the graph, and the \emph{average distance} between the nodes is given as
\begin{equation}
\label{distav}
\tilde{\delta}(\mathcal{G})=\frac{2}{n^2-n}\sum_{1\leq i<j\leq n}\delta_{ij}.
\end{equation}
A graph is \emph{connected} if there exists a path between every pair of nodes. A connected undirected graph with $n$ nodes is called a \emph{tree} if  there is exactly one path between each pair of nodes. Any connected graph with $n$ nodes and $n-1$ edges is a tree.  A graph is called a $k$-regular graph if the number of edges incident to each node (the degree) is equal to $k$. A random $k$-regular graph, $\mathcal{G}_{n,k}$, is a graph that is selected uniformly at random from the set of all $k$-regular graphs with $n$ nodes.

For weighted graphs, we use $w \in \mathbb{R}_+^{|E|}$ to denote the vector of edge weights and $w_{ij}\in \mathbb{R}_+$ to denote the weight of the edge $(i,j)\in E$. The (weighted) \emph{graph Laplacian} of a weighted graph is defined as
\begin{equation}
\label{Deg}
 [L_w]_{ij}=\left\{\begin{array}{ll}\sum_{k\in \mathcal{N}_{i}}w_{ik}&\mbox{ if } 
i=j\\-w_{ij}&\mbox{ if } j\in \mathcal{N}_i \\ 0&\mbox{ otherwise }\end{array}\right.
\end{equation}

In the remainder of the paper, we will use $L$ to denote the unweighted Laplacian, i.e., the special case when $w=\bold{1}$.

\subsection{Consensus Networks}
Consensus networks can be represented as a graph, where the nodes correspond to the agents, and the weighted edges exist between the agents that are coupled through local interactions. For such a network $\mathcal{G}=(V,E)$, let the dynamics of each agent $i\in V$ be
\begin{equation}
\label{consensus}
\dot{x}_i(t)=\sum_{j\in \mathcal{N}_i}w_{ij}(x_j(t)-x_i(t))+ \xi_i(t),
\end{equation}
where $x_i(t) \in \mathbb{R}$ denotes the state of $i$, each ${w_{ij} \in \mathbb{R}_+}$ is a constant weight representing the strength of the coupling between $i$ and $j$, and $\xi(t) \in \mathbb{R}^n$ is i.i.d. white Gaussian noise with zero mean and unit covariance, which is one of the standard noise models for agents that are independently affected by disturbances of same
intensity due to various effects such as communication errors, noisy measurements, or quantization errors (e.g.,\cite{Young10,Bamieh12,Siami16}).  Accordingly, the overall dynamics of the agents can be expressed as 
\begin{equation}
\label{consensus2}
\dot{x}(t)=-L_wx(t)+ \xi(t),
\end{equation}
where $L_w$ denotes the weighted Laplacian. In a noise-free setting ($\xi(t)= \bold{0}$ for all $t\geq 0$), the dynamics in \eqref{consensus2} are known to result in a global consensus,  ${\lim_{t\to \infty} x(t) \in span \{\bold{1}\}}$, for any $x(0) \in \mathbb{R}^n$ if and only if the graph is connected \cite{Jadbabaie03,Ren05}. In the noisy case, a perfect consensus can not be achieved. Instead, some finite steady state variance of $x(t)$ is observed on connected graphs \cite{Young10, Bamieh12}. Accordingly, the robustness of the network can be quantified through the expected population variance in steady state, i.e.,
\begin{equation}
\label{heq}
\mathcal{H}(\mathcal{G},w) \coloneqq \lim_{t \to \infty} \frac{1}{n}\sum\limits_{i =1}^{n} \mathrm{E}[{(x_i(t)-\tilde{x}(t))^2}],
\end{equation}
where $\tilde{x}(t) \in \mathbb{R}$ denotes the average of $x_1(t), x_2(t), \hdots, x_n(t)$.

 It can be shown that (e.g., see \cite{Young10,Bamieh12}) $\mathcal{H}(\mathcal{G},w)$ is equal to $1/n$ times the square of the $\mathcal{H}_2$-norm of the system in \eqref{consensus2} from the input $\xi(t)$ to the output $y(t) \in \mathbb{R}^n$ defined as ${y_i(t)= x_i(t)-\tilde{x}(t)}$, and it satisfies  
\begin{equation}
\label{heq2}
\mathcal{H}(\mathcal{G},w) = \frac{1}{2n} \sum_{i=2}^n \frac{1}{\lambda_i (L_w)},
\end{equation}
where   and $0 < \lambda_2(L_w) \leq \hdots \leq \lambda_n(L_w) $ denote the eigenvalues of the weighted Laplacian $L_w$.

In this paper, we investigate how much the structure of the underlying graph (the edge set $E$) causes vulnerability to noise in consensus networks. We measure the structural vulnerability of any given network to noise based on the smallest possible value of $\mathcal{H}(\mathcal{G},w)$, given that the edge weights should belong to the feasible set ${ \mathcal{W}= \{w  \mid  \bold{0} < w \leq \bold{1}  \}}$. Since multiplying all the weights by some $\alpha \in \mathbb{R}_+$ results in ${L_{\alpha w}=\alpha L_w}$ and ${\mathcal{H}(\mathcal{G},\alpha w)= \mathcal{H}(\mathcal{G},w)/\alpha}$ due to \eqref{heq2}, it is possible to make $\mathcal{H}(\mathcal{G},w)$ arbitrarily small for any network by just scaling up all the weights. By considering only weights in $(0,1]$, we remove this possibility and focus on the impact of network structure. 

\begin{definition}
(Structural Vulnerability and Robustness) The structural vulnerability of an undirected consensus network $\mathcal{G}=(V,E)$ to noise is the smallest possible value of $\mathcal{H}(\mathcal{G},w)$ that is achievable under weights from the unit interval, i.e.,
\begin{equation}
\label{strob}
\mathcal{H}^*(\mathcal{G}) \coloneqq \min_{\bold{0} < w \leq \bold{1}} \mathcal{H}(\mathcal{G},w).
\end{equation}
The structural robustness to noise is quantified using the reciprocal of structural vulnerability, $1/\mathcal{H}^*(\mathcal{G})$.
\end{definition}

 \begin{remark}
 For brevity, we will say ``structural robustness (or vulnerability)" without explicitly saying ``to noise". The term ``structural robustness" is also used in the literature for referring to the robustness of connectivity to node/edge failures (e.g., \cite{Wu11,Abbas12}). While the two notions of robustness have connections, the distinction should be clear from the context.
   \end{remark}

\section{Main Results}
\label{main}
In this section, we provide the main results of this paper. We start our derivations by providing $\mathcal{H}^*(\mathcal{G})$ as a function of the (unweighted) Laplacian eigenvalues.

\begin{lem} \label{lspect}For any connected undirected graph $\mathcal{G}$,   \begin{equation}
\label{hseig}
\mathcal{H}^*(\mathcal{G})= \frac{1}{2n} \sum_{i=2}^n \frac{1}{\lambda_i (L)},
\end{equation} where $L$ denotes the unweighted Laplacian of $\mathcal{G}$.\end{lem}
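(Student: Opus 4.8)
The plan is to show that the minimization in \eqref{strob} is solved by the all-ones weight vector $w=\mathbf{1}$, so that $L_w = L$ and \eqref{hseig} follows immediately from \eqref{heq2}. The key structural observation is that the weighted Laplacian depends monotonically on the edge weights in the positive-semidefinite (Loewner) order, and that $\mathcal{H}(\mathcal{G},w)$ inherits a corresponding monotonicity through its eigenvalue representation.

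First I would write the weighted Laplacian as a nonnegative combination of per-edge Laplacians. For each edge $e=(i,j)\in E$, let $L_e=(\mathbf{e}_i-\mathbf{e}_j)(\mathbf{e}_i-\mathbf{e}_j)^\top\succeq 0$, where $\mathbf{e}_i$ is the $i$-th standard basis vector. Then \eqref{Deg} gives $L_w=\sum_{e\in E} w_e L_e$. Consequently, whenever $w\leq w'$ elementwise, we have $L_{w'}-L_w=\sum_{e\in E}(w'_e-w_e)L_e\succeq 0$, i.e., $L_w\preceq L_{w'}$ in the Loewner order.

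Next I would invoke the monotonicity of eigenvalues under the Loewner order, a standard consequence of the Courant–Fischer min-max characterization (equivalently, Weyl's inequality): $L_w\preceq L_{w'}$ implies $\lambda_i(L_w)\leq\lambda_i(L_{w'})$ for every $i$. Since $\mathcal{G}$ is connected and every feasible weight vector satisfies $w>\mathbf{0}$, the weighted Laplacian has a simple zero eigenvalue with $\lambda_i(L_w)>0$ for all $i\geq 2$; hence each reciprocal $1/\lambda_i(L_w)$ is well defined and nonincreasing in $w$. Summing over $i=2,\dots,n$ and using \eqref{heq2} shows that $\mathcal{H}(\mathcal{G},w)$ is nonincreasing as the weights increase.

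Finally, because every feasible weight vector obeys $w\leq\mathbf{1}$ and $\mathbf{1}$ is itself feasible, the monotonicity just established yields $\mathcal{H}(\mathcal{G},w)\geq\mathcal{H}(\mathcal{G},\mathbf{1})$ for all feasible $w$, so the minimum defining $\mathcal{H}^*(\mathcal{G})$ is attained at $w=\mathbf{1}$, where $L_w=L$. Substituting into \eqref{heq2} gives \eqref{hseig}. I expect no serious obstacle in this argument; the only points requiring care are verifying that $\lambda_2(L_w)>0$ for strictly positive weights on a connected graph (so the reciprocals are finite and the objective is well defined over the entire feasible set), and stating the eigenvalue-monotonicity step precisely rather than asserting it informally.
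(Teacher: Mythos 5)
Your proposal is correct and follows essentially the same route as the paper's proof: both argue that increasing edge weights adds a positive semidefinite matrix to the Laplacian, invoke Weyl's inequality (eigenvalue monotonicity) to conclude that $\mathcal{H}(\mathcal{G},w)$ is minimized at $w=\mathbf{1}$, and then apply \eqref{heq2}. Your explicit rank-one decomposition $L_w=\sum_{e\in E}w_eL_e$ and the check that $\lambda_2(L_w)>0$ on connected graphs are just slightly more detailed renderings of steps the paper states informally.
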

\begin{proof}
For any connected undirected $\mathcal{G}$, any weighted Laplacian is a positive semidefinite matrix \cite{Merris94}. Increasing any of its weights or adding new edges leads to a new Laplacian that is equal to the initial Laplacian plus another matrix that is also a weighted Laplacian (a graph with just the added/strengthened edges).   All the Laplacian eigenvalues monotonically (not necessarily strictly) increase under such an addition of a positive semidefinite matrix due to the Weyl's inequality (e.g., see \cite{Horn90}). Hence, $\mathcal{H}(\mathcal{G},w)$ is minimized for $w=\bold{1}$ within the feasible set of \eqref{strob}. Accordingly, using \eqref{heq2}, we obtain \eqref{hseig}. 
\end{proof}

In light of Lemma \ref{lspect}, $\mathcal{H}^*(\mathcal{G})$ of any connected network can be computed through the eigenvalues of the unweighted Laplacian.
Furthermore, using this result, $\mathcal{H}^*(\mathcal{G})$ can also be expressed in terms of a graph measure known as the Kirchhoff index (total effective resistance) \cite{Klein93}. For any connected undirected graph with $n$ nodes, $\mathcal{G}$, the Kirchhoff index satisfies
\begin{equation}
\label{eqn:K_f}
K_f(\mathcal{G}) = n\sum\limits_{i=2}^n\frac{1}{\lambda_i(L)},
\end{equation} 
where $L$ is the Laplacian of $\mathcal{G}$. Accordingly, due to \eqref{hseig},
\begin{equation}
\label{srobKf}
\mathcal{H}^*(\mathcal{G})= \frac{K_f(\mathcal{G})}{2n^2}.
\end{equation} 

The connection in  \eqref{srobKf} is particularly useful as it links the structural robustness to the rich literature in graph theory on Kirchhoff index. For instance, closed form expressions in terms of size are known for some graph families (e.g., see \cite{Palacios01,Lukovits99, Ellens11}). Using those results on Kirchhoff index we immediately obtain that the path ($\mathcal{P}_n$), cycle ($\mathcal{C}_n$), star ($\mathcal{S}_n$), and complete ($\mathcal{K}_n$) graphs of size $n$ have
\begin{align}
   \label{robPCSK}
\mathcal{H}^*(\mathcal{P}_n) =\frac{n^2-1}{12n} &\;,\;
\mathcal{H}^*(\mathcal{C}_n) =\frac{n^2-1}{24n}, \\
\mathcal{H}^*(\mathcal{S}_n) = \frac{(n-1)^2}{2n^2} &\;,\;
\mathcal{H}^*(\mathcal{K}_n)=\frac{(n-1)}{2n^2}. 
\end{align}

Furthermore, among all the connected undirected graphs with $n$ nodes, the Kirchoff index is minimized in the complete graph $\mathcal{K}_n$ and maximized in the path graph $\mathcal{P}_n$ (e.g., see \cite{Ellens11}). As such, in light of \eqref{srobKf}, $\mathcal{K}_n$ and $\mathcal{P}_n$ are also the minimizer and maximizer of $\mathcal{H}^*(\mathcal{G})$, respectively. 
\subsection{Impact of Average Degree and Average Distance}

The structural vulnerability of any given network can be computed by using the Laplacian eigenvalues as in \eqref{hseig}. However, it is not easy to use \eqref{hseig} or \eqref{srobKf} for certain analysis and design applications in a systematic and efficient way. For instance, finding an optimal way to add a given number of edges to an arbitrary network to reduce the $\mathcal{H}^*(\mathcal{G})$ would require searching among all possibilities (e.g., see \cite{Ellens11}). Furthermore, while it is possible to see how $\mathcal{H}^*(\mathcal{G})$ scales with size for the special graph families with closed form expressions as in \eqref{robPCSK}, it is hard to analyze the asymptotic robustness of generic networks. One way to overcome these type of difficulties is focusing on some bounds on $\mathcal{H}^*(\mathcal{G})$ rather than its exact value.

Many bounds on the Kirchhoff index have been proposed in the literature by using graph measures such as chromatic number, independence number, edge/node connectivity, diameter, or degree sequence (e.g., see \cite{Zhou08,Milovanovic17}). These bounds typically require significant amount of global information and/or computation, which limits their applicability in large networks. Motivated by such limitations, we present a fundamental relationship between the $\mathcal{H}^*(\mathcal{G})$ and two aggregate measures, namely the average node degree and the average distance between nodes, which can be computed/estimated efficiently based on limited information (e.g. \cite{Goldreich08}). 

 
\begin{theorem}
\label{bounds}
For any connected undirected graph ${\mathcal{G}=(V,E)}$ with $n\geq 2$ nodes, \begin{equation}
\frac{(n-1)^2}{2\tilde{d}(\mathcal{G})n^2} \leq \mathcal{H}^*(\mathcal{G}) \leq \frac{\tilde{\delta}(\mathcal{G})(n-1) }{4n},
\label{boundseq}
\end{equation}
where $\tilde{d}(\mathcal{G})$ is the average node degree, $\tilde{\delta}(\mathcal{G})$ is the average distance between the nodes. Moreover, the lower bound holds with equality if and only if $\mathcal{G}$ is a complete graph, and the upper bound holds with equality if and only if $\mathcal{G}$ is a tree. 
\end{theorem}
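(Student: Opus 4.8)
The plan is to treat the two inequalities independently, working from the spectral form \eqref{hseig} for the lower bound and from the effective-resistance form \eqref{srobKf} of the Kirchhoff index for the upper bound; Lemma \ref{lspect} has already reduced everything to the unweighted Laplacian $L$, so no further optimization over the weights is needed.

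For the lower bound I would begin with the elementary trace identity $\sum_{i=2}^n \lambda_i(L) = \operatorname{trace}(L) = \sum_{i=1}^n d_i = n\tilde{d}(\mathcal{G})$, where $\lambda_1(L)=0$ since $\mathcal{G}$ is connected. Because the remaining eigenvalues $\lambda_2(L),\dots,\lambda_n(L)$ are strictly positive, the Cauchy--Schwarz inequality (equivalently the arithmetic--harmonic mean inequality) applied to these $n-1$ numbers gives
\[
\Big(\sum_{i=2}^n \lambda_i(L)\Big)\Big(\sum_{i=2}^n \tfrac{1}{\lambda_i(L)}\Big) \ge (n-1)^2 .
\]
Substituting the trace value and inserting the result into \eqref{hseig} yields $\mathcal{H}^*(\mathcal{G}) \ge \frac{(n-1)^2}{2n^2\tilde{d}(\mathcal{G})}$. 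Equality in Cauchy--Schwarz forces $\lambda_2(L)=\dots=\lambda_n(L)=c$, so that $L = c\big(I - \tfrac{1}{n}\mathbf{1}\mathbf{1}^\top\big)$; since the unweighted Laplacian has off-diagonal entries in $\{0,-1\}$, the common value $-c/n$ must equal $-1$, forcing $c=n$ and hence the complete graph.

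For the upper bound I would use that the Kirchhoff index is the sum of pairwise effective resistances, $K_f(\mathcal{G}) = \sum_{1\le i<j\le n} R_{ij}$, together with Rayleigh's monotonicity principle: with unit edge resistances, the series connection along a single shortest path already realizes resistance $\delta_{ij}$, and adjoining any further path can only decrease it, so $R_{ij} \le \delta_{ij}$. Summing over all pairs and recognizing the definition \eqref{distav} gives $K_f(\mathcal{G}) \le \tfrac{n(n-1)}{2}\tilde{\delta}(\mathcal{G})$, which through \eqref{srobKf} produces $\mathcal{H}^*(\mathcal{G}) \le \frac{(n-1)\tilde{\delta}(\mathcal{G})}{4n}$. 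Equality requires $R_{ij}=\delta_{ij}$ for every pair; in a tree the unique path makes each pairwise network a pure series circuit so equality holds, and conversely any cycle furnishes, for an edge on it, a parallel path that strictly lowers its resistance below $1=\delta_{ij}$, so only trees attain the bound.

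I expect the inequalities to be the easy part, since each reduces to one clean principle (Cauchy--Schwarz and resistance monotonicity), and the main obstacle to be the two equality characterizations. Specifically, justifying that only the complete graph meets the lower bound needs the spectral argument identifying $L$ with a scaled projection, and justifying that only trees meet the upper bound needs the cycle/parallel-path argument; both the ``if'' and ``only if'' directions must be verified with care to support the stated equivalences.
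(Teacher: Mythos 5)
Your proof is correct and follows essentially the same route as the paper: the arithmetic--harmonic mean inequality (your Cauchy--Schwarz form is equivalent) together with the trace identity $\sum_{i=2}^n \lambda_i(L) = n\tilde{d}(\mathcal{G})$ for the lower bound, and the Kirchhoff-index/effective-resistance comparison $r_{ij} \le \delta_{ij}$ for the upper bound. The only difference is that where the paper cites the literature for the two equality characterizations (all positive Laplacian eigenvalues are equal iff the graph is complete, and $r_{ij}=\delta_{ij}$ iff the $i$--$j$ path is unique), you prove these facts inline via the projection structure $L = c\bigl(I - \tfrac{1}{n}\mathbf{1}\mathbf{1}^\top\bigr)$ and Rayleigh monotonicity, which makes your argument self-contained but does not change the approach.
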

\begin{proof}
(Lower bound:) Since the harmonic mean is always less than or equal to the arithmetic mean, we have
\begin{equation}
\label{lb1}
\frac{n-1}{\sum_{i=2}^n \lambda_i (L)} \leq \frac{1}{n-1}\sum_{i=2}^n \frac{1}{\lambda_i (L)},
\end{equation} 
where the left side is the harmonic mean and the right side is the arithmetic mean of $1/\lambda_2(L), 1/\lambda_3(L), \hdots, 1/\lambda_n(L)$. Furthermore since $L$ is a symmetric matrix, the sum of its eigenvalues equals its trace, which is equal to the sum of node degrees $n\tilde{d}(\mathcal{G})$. Hence, \eqref{lb1} implies
\begin{equation}
\label{lb2}
\frac{(n-1)^2}{n\tilde{d}(\mathcal{G})} \leq \sum_{i=2}^n \frac{1}{\lambda_i (L)}.
\end{equation} 
Due to \eqref{heq2} and \eqref{lb2},
\begin{equation}
\label{lb3}
\mathcal{H}^*(\mathcal{G}) =\frac{1}{2n} \sum_{i=2}^n \frac{1}{\lambda_i (L)} \geq \frac{(n-1)^2}{2\tilde{d}(\mathcal{G})n^2} .
\end{equation} 
Alternatively, \eqref{lb3} can also be obtained by using \cite[Theorem 6]{Siami16} and Lemma \ref{lspect}, which implies that the performance measure in \cite[Theorem 6]{Siami16} equals $n\mathcal{H}^*(\mathcal{G})$.  

Note that the harmonic mean equals the arithmetic mean if and only if all the numbers are equal. Hence, \eqref{lb1} holds with equality if and only if $\lambda_2(L)=\lambda_3(L)= \hdots = \lambda_n(L)$. Furthermore, all the positive Laplacian eigenvalues of a connected graph are equal if and only if the graph is a complete graph (e.g., see \cite{Merris94}). Hence, \eqref{lb3} holds with equality if and only if $\mathcal{G}$ is a complete graph.

(Upper bound:)  The Kirchoff index is defined as the sum of pairwise effective resistances between nodes \cite{Klein93,Ellens11}, i.e.,
\begin{equation}
\label{ub2a}
K_f(\mathcal{G}) = \sum_{1\leq i < j \leq n}r_{ij},
\end{equation}
where $r_{ij}$ is equal to the effective resistance between the nodes $i$ and $j$ on an electrical network that is obtained by assigning a unit resistor to each edge of $\mathcal{G}$. For any two nodes $i$ and $j$, $r_{ij}=\delta_{ij}$ if there is a unique path between $i$ and $j$, and $r_{ij}<\delta_{ij}$ otherwise  (e.g., see \cite[Theorem 2.4]{Ellens11}). Accordingly, the Kirchoff index satisfies 
\begin{equation}
\label{ub2}
K_f(\mathcal{G}) \leq \sum_{1\leq i < j \leq n}\delta_{ij},
\end{equation}
 and \eqref{ub2} holds with equality if and only if there is a unique path between any two nodes, i.e. $\mathcal{G}$ is a tree. Since the sum of distances between the nodes satisfy
\begin{equation}
\label{ub3}
 \sum_{1\leq i < j \leq n}\delta_{ij}= \frac{n(n-1)\tilde{\delta}(\mathcal{G})}{2},\end{equation}
\eqref{srobKf} and \eqref{ub2} together imply
\begin{equation}
\label{ub4}
\mathcal{H}^*(\mathcal{G}) \leq \frac{\tilde{\delta}(\mathcal{G})(n-1)}{4n}.\end{equation}
Furthermore, since \eqref{ub2} holds with equality if and only if $\mathcal{G}$ is a tree, the same is true for the inequality in \eqref{ub4}. Alternatively, the upper bound can also be proved by using \cite[Theorem 2]{Sivasubramanian09} and Lemma \ref{lspect}. 
\end{proof}

 Theorem \ref{bounds} is closely related to \cite[Thms. 6 and 8]{Siami16}, which can be combined with Lemma \ref{lspect} to obtain two other bounds on $\mathcal{H}^*(\mathcal{G})$: a lower bound based on the degree sequence (all the node degrees) and an upper bound based on the diameter. While using the degree sequence may yield a better lower bound (closer to actual value), the average degree can be computed with significantly less information, namely the number of nodes and the number of edges as in \eqref{degav2}. The upper bound in Theorem \ref{bounds} can be computed by using only the pairwise distances between nodes, whereas computing the upper bound in \cite[Thm. 6]{Siami16} also requires the number of edges. These two upper bounds may outperform each other on different graphs. 
For example, while \cite[Thm. 6]{Siami16} yields a better upper bound for the complete graph, Theorem \ref{bounds} gives a better upper bound for the path graph.

\subsection{Graphs with Extremal Robustness Scaling}
One of the important considerations when designing large scale networks is how the robustness of the system would scale with its size. As indicated by \eqref{robPCSK}, different network topologies may exhibit different robustness scaling properties. 
For instance, while the  structural vulnerability  of complete graph, $\mathcal{H}^*(\mathcal{K}_n)$, tends to zero as the network size increases (see \eqref{robPCSK}), the  structural vulnerability  of path graph, $\mathcal{H}^*(\mathcal{P}_n)$, tends to infinity as the network size increases (see \eqref{robPCSK}). Apart from these two extremal cases of robustness scaling, there are also networks (e.g., star graph) such that $\mathcal{H}^*(\mathcal{G}_n)$ converges to some non-zero value as the network size increases. One question of interest is then which topological properties determine how the structural robustness behaves as the size goes to infinity. Our next result provides a graph topological characterization of networks with extremal robustness scaling. 
\begin{cor}
 \label{findeg} Let $\{\mathcal{G}_n\}_{n\in\mathbb{N}}$ denote an infinite sequence of connected undirected graphs with $n$ nodes. The structural vulnerability of $\mathcal{G}_n$ tends to zero as $n$ goes to infinity only if the average node degree grows unbounded, i.e.,  
\begin{equation}
\label{findeg1}
\lim_{n \to \infty}\mathcal{H}^*(\mathcal{G}_n) = 0 \Rightarrow \lim_{n \to \infty} \tilde{d}(\mathcal{G}_n) = \infty.
\end{equation}
Furthermore, the structural vulnerability grows unbounded only if the average distance also grows unbounded, i.e.,  
\begin{equation}
\label{findeg2}
\lim_{n \to \infty}\mathcal{H}^*(\mathcal{G}_n) = \infty \Rightarrow \lim_{n \to \infty} \tilde{\delta}(\mathcal{G}_n) = \infty.
\end{equation}
\end{cor}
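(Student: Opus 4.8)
The plan is to read the two inequalities of Theorem \ref{bounds} in opposite directions: the lower bound, which involves the average degree, yields \eqref{findeg1}, while the upper bound, which involves the average distance, yields \eqref{findeg2}. Both implications then follow by elementary limit arguments, and no new structural input is required beyond the bounds already established.

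For \eqref{findeg1}, I would start from the lower bound $\mathcal{H}^*(\mathcal{G}_n) \geq \frac{(n-1)^2}{2\tilde{d}(\mathcal{G}_n)n^2}$. Assuming $\mathcal{H}^*(\mathcal{G}_n) \to 0$, the nonnegative right-hand side is squeezed to zero, so $\frac{(n-1)^2}{2\tilde{d}(\mathcal{G}_n)n^2} \to 0$. Dividing by the factor $\frac{(n-1)^2}{n^2} \to 1$ then gives $\frac{1}{2\tilde{d}(\mathcal{G}_n)} \to 0$, i.e. $\tilde{d}(\mathcal{G}_n) \to \infty$. Equivalently, one can argue by contraposition: if $\tilde{d}(\mathcal{G}_n)$ stays bounded by some constant $D$ along a subsequence, the same lower bound keeps $\mathcal{H}^*$ bounded below by a quantity tending to $1/(2D) > 0$ along that subsequence, so $\mathcal{H}^*(\mathcal{G}_n)$ cannot tend to zero.

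For \eqref{findeg2}, I would use the upper bound $\mathcal{H}^*(\mathcal{G}_n) \leq \frac{\tilde{\delta}(\mathcal{G}_n)(n-1)}{4n}$ directly. Since $(n-1)/n < 1$, this rearranges to $\tilde{\delta}(\mathcal{G}_n) \geq 4\,\mathcal{H}^*(\mathcal{G}_n)$, so that $\mathcal{H}^*(\mathcal{G}_n) \to \infty$ immediately forces $\tilde{\delta}(\mathcal{G}_n) \to \infty$, which is exactly the claim.

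There is no substantial obstacle here, as both claims are essentially one-line consequences of Theorem \ref{bounds}; the entire content of the corollary is already packaged in those bounds. The only points that require minor care are bookkeeping ones: keeping track of the asymptotic normalization $(n-1)^2/n^2 \to 1$ in the first part, and --- if one prefers the contrapositive formulation --- passing to a subsequence, since the failure of $\tilde{d}(\mathcal{G}_n) \to \infty$ only guarantees boundedness along \emph{some} subsequence rather than along the full sequence.
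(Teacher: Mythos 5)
Your proposal is correct and follows essentially the same route as the paper: the lower bound of Theorem \ref{bounds} plus the squeeze theorem gives \eqref{findeg1}, and the upper bound (with $(n-1)/n < 1$) gives \eqref{findeg2}. The extra details you supply --- the normalization $(n-1)^2/n^2 \to 1$ and the subsequence care in the contrapositive --- are fine but not needed beyond what the paper's own argument already contains.
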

\begin{proof}
($\mathcal{H}^*(\mathcal{G}_n) \to 0 $): Note that the lower bound in \eqref{boundseq} is non-negative for any connected undirected $\mathcal{G}$ with $n\geq 2$ nodes. Hence, due to the squeeze theorem, if $\mathcal{H}^*(\mathcal{G}_n)$ tends to zero then the lower bound must also tend to zero, i.e., 
\begin{equation}
\label{findeg2b}
\lim_{n \to \infty}\mathcal{H}^*(\mathcal{G}_n) = 0 \Rightarrow \lim_{n \to \infty} \frac{(n-1)^2}{2\tilde{d}(\mathcal{G}_n)n^2}= 0.
\end{equation}
Since the average node degree $\tilde{d}(\mathcal{G}_n)$ is positive, \eqref{findeg2b} implies
\begin{equation}
\label{findeg3}
\lim_{n \to \infty}\tilde{d}(\mathcal{G}_n)= \infty.
\end{equation}

($\mathcal{H}^*(\mathcal{G}_n) \to \infty $):   If $\mathcal{H}^*(\mathcal{G}_n)$ diverges as $n$ goes to infinity, the upper bound in \eqref{boundseq} must also diverge, which is only possible if the average distance between nodes, $\tilde{\delta}(\mathcal{G}_n)$, diverges. 
\end{proof}

\subsection{Structural Robustness vs. Sparsity}

We first highlight a fundamental trade-off between structural robustness and sparsity. We use the average node degree as the measure of sparsity (lower average degree implies higher sparsity). We express this trade-off in terms of tight bounds, i.e., bounds that are satisfied with equality for some connected $\mathcal{G}_n$, on the ratio of structural robustness of any given graph to the structural robustness of the complete graph, which has the best robustness among all connected graphs, and the star graph, which has the best structural robustness achievable with the minimum number of edges a connected graph can have.

\begin{theorem}
\label{gn-hstar-th}For any connected undirected graph $\mathcal{G}_n$,
\begin{equation}
\label{gn-hstar}
\frac{\mathcal{H}^*(\mathcal{S}_n)}{\mathcal{H}^*(\mathcal{G}_n)}\leq \tilde{d}(\mathcal{G}_n),
\end{equation}
\begin{equation}
\label{gn-hcomp}
\frac{\mathcal{H}^*(\mathcal{G}_n)}{\mathcal{H}^*(\mathcal{K}_n)}\geq \frac{n-1}{\tilde{d}(\mathcal{G}_n)},
\end{equation}
where  $\mathcal{S}_n$ and $\mathcal{K}_n$ denote the star and complete graphs with $n$ nodes. Furthermore, these inequalities are tight in the sense that they hold with equality for some $\mathcal{G}_n$.
\end{theorem}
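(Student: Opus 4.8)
The plan is to derive both inequalities directly from the lower bound established in Theorem \ref{bounds}, namely $\mathcal{H}^*(\mathcal{G}_n) \geq \frac{(n-1)^2}{2\tilde{d}(\mathcal{G}_n) n^2}$, combined with the closed-form values $\mathcal{H}^*(\mathcal{S}_n) = \frac{(n-1)^2}{2n^2}$ and $\mathcal{H}^*(\mathcal{K}_n) = \frac{n-1}{2n^2}$ recorded in \eqref{robPCSK}. The first observation I would make is that this one lower bound already encodes both \eqref{gn-hstar} and \eqref{gn-hcomp}: the two claimed inequalities are nothing more than that bound rewritten in terms of the star graph and the complete graph, respectively. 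So the entire argument reduces to algebraic rearrangement of a result already in hand.

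For \eqref{gn-hstar}, I would multiply the lower bound through by $\tilde{d}(\mathcal{G}_n) > 0$ to get $\tilde{d}(\mathcal{G}_n)\,\mathcal{H}^*(\mathcal{G}_n) \geq \frac{(n-1)^2}{2n^2}$, then recognize the right-hand side as exactly $\mathcal{H}^*(\mathcal{S}_n)$ and divide by $\mathcal{H}^*(\mathcal{G}_n) > 0$. For \eqref{gn-hcomp}, I would instead factor the same lower bound as $\frac{(n-1)^2}{2\tilde{d}(\mathcal{G}_n) n^2} = \frac{n-1}{\tilde{d}(\mathcal{G}_n)} \cdot \frac{n-1}{2n^2}$, read off $\frac{n-1}{2n^2} = \mathcal{H}^*(\mathcal{K}_n)$, and divide by $\mathcal{H}^*(\mathcal{K}_n) > 0$. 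Both steps are routine once the factorization is seen.

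For the tightness claim, I would invoke the equality condition of Theorem \ref{bounds}, which states that the lower bound is attained precisely when $\mathcal{G}_n$ is complete. Taking $\mathcal{G}_n = \mathcal{K}_n$, so that $\tilde{d}(\mathcal{K}_n) = n-1$, turns every inequality in the derivation above into an equality; a direct substitution confirms $\mathcal{H}^*(\mathcal{S}_n)/\mathcal{H}^*(\mathcal{K}_n) = n-1 = \tilde{d}(\mathcal{K}_n)$ and $\mathcal{H}^*(\mathcal{K}_n)/\mathcal{H}^*(\mathcal{K}_n) = 1 = (n-1)/\tilde{d}(\mathcal{K}_n)$, so both bounds are simultaneously achieved.

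I do not anticipate a genuine obstacle, since the whole statement collapses onto the lower bound of Theorem \ref{bounds}. The only point requiring slight care is the tightness witness: it is tempting to use the star graph as the extremal example for \eqref{gn-hstar}, but a quick check shows the star attains equality there only in the degenerate case $n=2$, so the complete graph is the correct single witness for both inequalities.
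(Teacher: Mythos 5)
Your proof is correct and follows exactly the paper's own argument: both inequalities are read off from the lower bound of Theorem \ref{bounds} combined with the closed-form values in \eqref{robPCSK}, and tightness is witnessed by $\mathcal{G}_n = \mathcal{K}_n$ using $\tilde{d}(\mathcal{K}_n) = n-1$. The paper states this in one line without the algebraic details; your expanded derivation (and the side remark that the star fails as a witness for \eqref{gn-hstar} unless $n=2$) is consistent with it.
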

\begin{proof}
Both \eqref{gn-hstar} and \eqref{gn-hcomp} follow from \eqref{robPCSK} and the lower bound in \eqref{boundseq}. The tightness of the bounds can be proved by showing that they are satisfied with equality for some $\mathcal{G}_n$. For instance, the bounds are satisfied with equality for the complete graph, $\mathcal{G}_n=\mathcal{K}_n$, due to \eqref{robPCSK} and the fact that $\tilde{d}(\mathcal{K}_n)=n-1$.
\end{proof}

Since $\mathcal{S}_n$ has the best structural robustness achievable with the minimum number of edges a connected graph can have, \eqref{gn-hstar} highlights the price of structural robustness in terms of sparsity. Any graph with significantly better structural robustness than the star graph of same size should have a proportionally high average degree. Similarly, \eqref{gn-hcomp} indicates how sparse a graph can be while having a certain level of structural robustness relative to the complete graph.

Theorem \ref{gn-hstar-th} can be used for the design of sparse yet robust networks. For example, consider a network design problem, where the goal is to build a network with the minimum number of edges that has a bounded robustness-suboptimality with respect to the complete graph, i.e., ${\mathcal{H}^*(\mathcal{G}_n)/\mathcal{H}^*(\mathcal{K}_n) \leq \alpha}$ for some desired $\alpha \geq 1$. For instance, in a wireless sensor network, this design problem can be motivated by the goal of achieving robust distributed estimation with minimum communication due to energy and  bandwidth considerations. In light of \eqref{gn-hcomp}, such a network must have an average degree of at least $(n-1)/\alpha$, i.e.,
\begin{equation}
\label{gn-hcomp-cor}
\frac{\mathcal{H}^*(\mathcal{G}_n)}{\mathcal{H}^*(\mathcal{K}_n)}\leq \alpha \Rightarrow \tilde{d}(\mathcal{G}_n) \geq \frac{n-1}{\alpha}.
\end{equation}
Accordingly, the design space can be narrowed down to the set of graphs with sufficiently many edges as per \eqref{gn-hcomp-cor}. Note that \eqref{gn-hcomp-cor} defines a necessary condition on sparsity and not every graph with that many edges have the desired robustness property. In fact, our next result shows that there even exist graphs whose structural vulnerability grows unbounded despite having such an average degree. This result also complements Corollary \ref{findeg} by showing that an unbounded growth in $\tilde{d}(\mathcal{G}_n)$ with increasing size is only a necessary condition and not a sufficient condition for $ \mathcal{H}^*(\mathcal{G}_n)$ to approach zero.

\begin{theorem}
\label{confrag}For any constant $\alpha> 1$, there exist infinite sequences of connected undirected graphs with $n$ nodes, $\{\mathcal{G}_n\}_{n\in\mathbb{N}}$, such that 
\begin{equation}
\label{confrag1}
\lim_{n \to \infty} \frac{\tilde{d}(\mathcal{G}_n)}{n-1} \geq \frac{1}{\alpha},\; \;
\lim_{n \to \infty}\mathcal{H}^*(\mathcal{G}_n) =  \infty.
\end{equation}
\end{theorem}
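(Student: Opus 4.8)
The plan is to exhibit a single family of \emph{dense} graphs whose structural vulnerability nonetheless diverges, thereby showing that a large average degree is very far from sufficient for good robustness (complementing Corollary~\ref{findeg}). Concretely, I would take $\mathcal{G}_n$ to be a clique $\mathcal{K}_m$ with a \emph{pendant path} (a ``tail'') of $\ell$ extra nodes attached to one of its vertices, so that $n=m+\ell$. The clique contributes $\binom{m}{2}$ edges and forces the average degree to be almost maximal, while the thin tail creates a bottleneck that inflates effective resistances. The whole analysis rests on the resistance representation $\mathcal{H}^*(\mathcal{G})=K_f(\mathcal{G})/(2n^2)=\frac{1}{2n^2}\sum_{1\le i<j\le n}r_{ij}$ established in \eqref{srobKf} and \eqref{ub2a}.

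For the degree condition I would simply count edges: $|E|=\binom{m}{2}+\ell$, so that
\begin{equation}
\frac{\tilde{d}(\mathcal{G}_n)}{n-1}=\frac{m(m-1)+2\ell}{(m+\ell)(m+\ell-1)}.
\end{equation}
Choosing $\ell=\ell(m)$ with $\ell=o(m)$ makes both numerator and denominator asymptotic to $m^{2}$, so this ratio tends to $1$. Since $1\ge 1/\alpha$ for every $\alpha>1$, a single construction handles all $\alpha$ simultaneously, which is slightly stronger than what \eqref{confrag1} asks for.

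For divergence of $\mathcal{H}^*$, the key observation is that the attachment vertex $a$ is a \emph{cut vertex}. Writing $v_1,\dots,v_\ell$ for the tail nodes at distances $1,\dots,\ell$ from $a$, the series law for effective resistance through a cut vertex gives, for every clique node $j$, $r_{v_k,j}=r_{v_k,a}+r_{a,j}\ge k$, since the unique $v_k$--$a$ path is $k$ unit resistors in series and $r_{a,j}\ge 0$. Summing these $\ell m$ nonnegative terms as a subset of all pairs yields
\begin{equation}
K_f(\mathcal{G}_n)\ \ge\ \sum_{k=1}^{\ell}\sum_{j\in \mathcal{K}_m} r_{v_k,j}\ \ge\ m\sum_{k=1}^{\ell}k\ \ge\ \frac{m\ell^{2}}{2},
\end{equation}
and hence $\mathcal{H}^*(\mathcal{G}_n)\ge m\ell^{2}/(4n^{2})\sim \ell^{2}/(4m)$. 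Taking $\ell$ in the window $\sqrt{m}\ll\ell\ll m$, for instance $\ell=\lceil m^{3/4}\rceil$, forces $\ell^{2}/m\to\infty$ and therefore $\mathcal{H}^*(\mathcal{G}_n)\to\infty$.

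The main thing to get right is exactly the balance encoded in $\sqrt{m}\ll\ell\ll m$: the tail must be long enough that $\ell^{2}/m\to\infty$ (otherwise the resistance bound stays finite), yet short enough that $\ell=o(m)$ (otherwise the clique no longer dominates the edge count and the degree ratio falls below $1$). The only genuinely nonroutine step is the cut-vertex identity $r_{v_k,j}=r_{v_k,a}+r_{a,j}$, of which I need only the direction $r_{v_k,j}\ge k$; this follows from the same effective-resistance facts already used in the proof of Theorem~\ref{bounds} together with Rayleigh monotonicity. An equivalent route would instead bound $\lambda_2(L)\le 3/\ell^{2}$ using the ramp test vector $x_{v_k}=k$, $x|_{\mathcal{K}_m}=0$, and invoke Lemma~\ref{lspect}, but the resistance argument is cleaner.
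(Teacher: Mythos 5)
Your proof is correct, and it in fact establishes something slightly stronger than Theorem \ref{confrag}: a single sequence with $\tilde{d}(\mathcal{G}_n)/(n-1)\to 1$ serves every $\alpha>1$ simultaneously. Your graph family is essentially the paper's own --- the paper also bridges a dense graph to a path and explicitly names the clique-plus-path case a lollipop graph --- but the parametrization and the key lemma are genuinely different. The paper splits the nodes in constant proportions tuned to $\alpha$ (clique on $p=\lceil n/\beta\rceil$ nodes, path on the remaining $q\approx n(1-1/\beta)$ nodes, with $\beta^3\le\alpha$) and obtains divergence of $\mathcal{H}^*$ by invoking the exact bridging formula \eqref{confrag4} for the Kirchhoff index from \cite{gago2018kirchhoff}, which gives $K_f(\mathcal{G}_n)=\Theta(n^3)$ and hence $\mathcal{H}^*(\mathcal{G}_n)=\Theta(n)$. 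You instead keep the tail asymptotically negligible ($\sqrt{m}\ll\ell\ll m$) and replace the cited formula with an elementary, self-contained lower bound: since the attachment vertex is a cut vertex (equivalently, every tail edge is a bridge), $r_{v_k,j}\ge r_{v_k,a}=k$ for every clique node $j$, and summing over the $\ell m$ clique--tail pairs gives $K_f(\mathcal{G}_n)\ge m\ell^2/2$, i.e., $\mathcal{H}^*(\mathcal{G}_n)=\Omega(\ell^2/m)$. What your route buys is independence from the external bridging formula (Rayleigh monotonicity, or the series law through a cut vertex, suffices) and the sharper qualitative message that even graphs of asymptotically maximal density can be arbitrarily fragile; what the paper's route buys is a faster divergence rate ($\Theta(n)$ versus your $\Theta(\sqrt{n})$ at $\ell=\lceil m^{3/4}\rceil$) and, since $\Gamma_p$ may be any sufficiently dense graph and $\Gamma'_q$ any graph with cubic Kirchhoff index, a family of examples spanning the whole density spectrum rather than only density near one. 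Two small touch-ups: your sequence is indexed by $m$ rather than by $n$, so to match the statement's $\{\mathcal{G}_n\}_{n\in\mathbb{N}}$ literally you should set $\ell=\lceil n^{3/4}\rceil$ and $m=n-\ell$ for each $n$; and the inequality $r_{v_k,j}\ge k$ deserves the one-line justification you sketch (short the clique to a single node, which by Rayleigh monotonicity can only decrease resistance and leaves exactly $k$ unit resistors in series), since the resistance facts quoted in the proof of Theorem \ref{bounds} only bound $r_{ij}$ from above by $\delta_{ij}$.
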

\begin{proof}
We prove this result by designing such a sequence of graphs. Consider the following bridging operation, $\bigoplus$, which connects two disjoint graphs, $\Gamma=(U,F)$ with the node set ${U=\{u_1, \hdots, u_p\}}$ and $\Gamma'=(U',F')$ with the node set ${U'=\{u_1', \hdots, u_q'\}}$, with a single edge such that $\mathcal{G}=(V,E)=\Gamma\bigoplus\Gamma'$ is defined as 
\begin{equation}
\label{confrag3}
V= U \cup U', \; E= F \cup F' \cup \{(u_i,u_j')\},
\end{equation}
for some $u_i \in U$ and $u_j' \in U'$.
It was shown in \cite{gago2018kirchhoff} that the Kirchhoff index of of such a bridged graph can be expressed in terms of the Kirchhoff indices of the two components as 
\begin{equation}
\label{confrag4}
K_f(\mathcal{G})=\dfrac{p+q}{p}K_f(\Gamma)+ \dfrac{p+q}{q}K_f(\Gamma')+ \dfrac{2p^2-3p+1}{6p}+ \dfrac{q-1}{q^2}+1.
\end{equation}
Furthermore, using \eqref{degav2} it can be shown that 
\begin{equation}
\label{confrag5}
\tilde{d}(\mathcal{G})=\dfrac{p\tilde{d}(\Gamma)+q\tilde{d}(\Gamma')+2}{p+q}.
\end{equation}

Now, consider any sequence of graphs $\mathcal{G}_n=\Gamma_{p}\bigoplus\Gamma'_{q}$ such that $p+q=n$, ${K_f(\Gamma'_{q}) = \mathcal{O}(q^3)}$, $p = \lceil n/ \beta \rceil$, and ${\tilde{d}(\Gamma_{p}) \geq (p-1)/ \beta}$ for some constant $\beta >1$ such that $\beta^3 \leq \alpha$. For instance, $\Gamma_{p}$ can be any graph with $p$ nodes and at least $p(p-1)/2\beta$ edges, and $\Gamma'_{q}$ can be a path or a cycle with $q$ nodes, which results in ${K_f(\Gamma'_{q}) = \mathcal{O}(q^3)}$ as per \eqref{robPCSK} and \eqref{srobKf}. For example, if $\Gamma_{p}$ is a complete graph and $\Gamma'_{q}$ is a path, the resulting graph $\Gamma_p\bigoplus\Gamma'_q$ is known as a lollipop graph. For $p=\lceil n/\beta \rceil$, we have $p \approx n/\beta$ and $q\approx n- n/\beta$, which implies  ${K_f(\Gamma'_{q}) = \mathcal{O}(q^3)=\mathcal{O}(n^3)} $ since $\beta >1$ is a constant. Accordingly, using \eqref{srobKf} and \eqref{confrag4}, one can show that such a sequence of graphs $\{\mathcal{G}_n\}_{n\in\mathbb{N}}$ satisfies
\begin{equation}
\label{confrag5c}
\lim_{n \to \infty} \mathcal{H}^*(\mathcal{G}_n)=\infty.
\end{equation}
Furthermore, \eqref{confrag5} and the fact that $p = \lceil n/ \beta \rceil$, $p+q=n$, and ${\tilde{d}(\Gamma_{p}) \geq (p-1)/ \beta}$ for some $\beta>1$ such that  $\beta^3 \leq \alpha$ imply
\begin{equation}
\label{confrag7}
\lim_{n \to \infty} \frac{\tilde{d}(\mathcal{G}_n)}{n-1}  \geq \lim_{n \to \infty} \dfrac{n/\beta(n/\beta-1)}{\beta (n^2-n)}=\frac{1}{\beta^3} \geq \frac{1}{\alpha}.
\end{equation}
Due to \eqref{confrag5c} and \eqref{confrag7}, any such $\{\mathcal{G}_n\}_{n\in\mathbb{N}}$  satisfies \eqref{confrag1}.
\end{proof}

 \subsection{Structural Robustness of Random Regular Graphs}
In this subsection, we show that random $k$-regular graphs are approximate solutions to the combinatorial problem of designing sparse networks with  optimal structural robustness. Our particular focus on random $k$-regular graphs is motivated by their several desirable properties. For example, such graphs yield uniform/bounded node degrees, which is useful in many applications that demand a balanced communication load among the nodes (e.g., \cite{melamed2004araneola,pandurangan2003building}). It is also known that random $k$-regular graphs ($k\geq 3$) are expander graphs, i.e., sparse yet well-connected structures that can not be easily disconnected by a targeted removal of nodes/edges (e.g., \cite{Hoory06}). Moreover, the algebraic connectivity of such graphs is bounded away from zero (e.g., \cite{Friedman03}), which not only implies fast convergence in consensus networks \cite{Olfati04} but also can be used for providing guarantees on their structural robustness. More specifically, as $n$ goes to infinity, for $k \geq 3$ almost every $k$-regular graph has ${\lambda_2(L)\geq k-2\sqrt{k-1}-\epsilon}$ for  any $\epsilon>0$ (e.g., see \cite{Friedman03}). In light of \eqref{hseig}, this property implies an upper bound on the  structural vulnerability  of those graphs since for any graph 
 \begin{equation}
\label{hseig2}
\frac{1}{2n} \sum_{i=2}^n \frac{1}{\lambda_i (L)} \leq \frac{n-1}{2n\lambda_2 (L)}.
\end{equation}
Accordingly, for any integer $k \geq 3$ and $\epsilon \in (0,k-2\sqrt{k-1})$
\begin{equation}
\lim_{n \to \infty} \Pr \left \{\mathcal{H}^*(\mathcal{G}_{n,k}) \leq \frac{n-1}{2n(k-2\sqrt{k-1}-\epsilon)} \right \}  =1,
\label{rreq}
\end{equation}
where $\mathcal{G}_{n,k}$ is a random $k$-regular graph. Since $n$ and $k$ cannot be both odd (the number of edges is equal to $nk/2$), for odd values of $k$ the limit in \eqref{rreq} is defined along the sequence of even integers $n \in \{k+1,k+3, \hdots\}$.  Furthermore, the probability in \eqref{rreq} tends to one rather fast with increasing $n$ even for moderate values of $k$. Using \cite[Theorem 1.1]{Friedman03}, it can be shown that for any even integer $k \geq 4$
\begin{equation}
\Pr \left \{\mathcal{H}^*(\mathcal{G}_{n,k}) \leq \frac{n-1}{2n(k-2\sqrt{k-1}-\epsilon)} \right \}  \geq 1- \frac{c}{n^\tau},
\label{resb1}
\end{equation}
 for some constant $c>0$ and
\begin{equation}
\tau = \left \lceil \frac{\sqrt{k-1}+1}{2} \right \rceil -1.
\label{resb2}
\end{equation}
As such, the bound in \eqref{rreq} is satisfied by a random $k$-regular graph of size $n$ with probability at least $1-\mathcal{O}(1/n)$ for $k=4$, with probability at least $1-\mathcal{O}(1/n^2)$ for $k=12$, and so on.

By combining \eqref{rreq} with the lower bound in \eqref{boundseq} for ${\tilde{d}(\mathcal{G})=k}$, we can show that for large values of $n$, with high probability, the  structural vulnerability of random $k$-regular graphs ($k\geq 3$) is within a constant factor of the smallest possible value among the graphs with the same size and average degree.  Furthermore, this factor gets arbitrarily close to one as $k$ increases. In other words, for large values of $k$, random $k$-regular graphs have a structural robustness arbitrarily close to the best possible value (with that many edges) with high probability as the network size increases.

\begin{theorem}
\label{rand-regt} For any integer $k\geq 3$ and $\epsilon \in (0,k-2\sqrt{k-1})$
\begin{equation}
\lim_{n \to \infty} \Pr \left \{ \frac{\mathcal{H}^*(\mathcal{G}_{n,k})}{\min\limits_{\mathcal{G}_n : \tilde{d}(\mathcal{G}_n)=k}\mathcal{H}^*(\mathcal{G}_n)}\leq \frac{k}{k-2\sqrt{k-1}-\epsilon}+\epsilon \right \}  =1, \; 
\label{rand-regt1}
\end{equation}
where $\mathcal{G}_{n,k}$ is a random $k$-regular graph.
\end{theorem}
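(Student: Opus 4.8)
The plan is to sandwich the ratio in \eqref{rand-regt1} between a high-probability upper bound on the numerator and a purely deterministic lower bound on the denominator, and then to absorb the resulting finite-size discrepancy into the additive $\epsilon$. Both ingredients are already available in the excerpt, so the argument reduces to an algebraic cancellation followed by a routine limiting step.

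First I would control the denominator. Since $\mathcal{G}_{n,k}$ is $k$-regular it has average degree exactly $k$, so it lies in the feasible set of the minimization, and more importantly the lower bound in Theorem \ref{bounds} applies to \emph{every} connected graph in that set. Concretely, any connected $\mathcal{G}_n$ with $\tilde{d}(\mathcal{G}_n)=k$ satisfies $\mathcal{H}^*(\mathcal{G}_n)\geq (n-1)^2/(2kn^2)$, whence
\[
\min_{\mathcal{G}_n:\,\tilde{d}(\mathcal{G}_n)=k}\mathcal{H}^*(\mathcal{G}_n)\;\geq\;\frac{(n-1)^2}{2kn^2}.
\]
Next I would invoke \eqref{rreq} to control the numerator: with probability tending to one, $\mathcal{H}^*(\mathcal{G}_{n,k})\leq (n-1)\big/\big(2n(k-2\sqrt{k-1}-\epsilon)\big)$. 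On the event where this upper bound holds, dividing it by the deterministic lower bound above and cancelling the common factors of $2$, $n$, and $(n-1)$ yields
\[
\frac{\mathcal{H}^*(\mathcal{G}_{n,k})}{\min\limits_{\mathcal{G}_n:\,\tilde{d}(\mathcal{G}_n)=k}\mathcal{H}^*(\mathcal{G}_n)}\;\leq\;\frac{k}{k-2\sqrt{k-1}-\epsilon}\cdot\frac{n}{n-1}.
\]

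The final step is to pass from this clean ratio to the stated bound carrying the extra $+\epsilon$. Because $n/(n-1)\to 1$ as $n\to\infty$ while $k/(k-2\sqrt{k-1}-\epsilon)$ is a fixed constant, there exists $N$ (depending only on $k$ and $\epsilon$) such that for all $n\geq N$ the right-hand side above is at most $k/(k-2\sqrt{k-1}-\epsilon)+\epsilon$. Consequently, for every $n\geq N$ the high-probability event of \eqref{rreq} is contained in the event inside the probability in \eqref{rand-regt1}, so the latter probability is at least as large. Letting $n\to\infty$ and using that the probability in \eqref{rreq} tends to one gives the claim.

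There is no substantive obstacle: the spectral-gap upper bound for random regular graphs and the arithmetic--harmonic-mean lower bound from Theorem \ref{bounds} do all the heavy lifting, and what remains is the cancellation and the finite-size argument that justifies the additive $\epsilon$. The only point demanding minor care is the parity caveat noted after \eqref{rreq}: for odd $k$ the limit must be interpreted along the subsequence of even $n\in\{k+1,k+3,\dots\}$, which does not affect the reasoning since the deterministic denominator bound and the limiting step are both insensitive to restricting to that subsequence.
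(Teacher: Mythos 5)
Your proposal is correct and follows essentially the same route as the paper's proof: bound the denominator deterministically via the degree-based lower bound of Theorem \ref{bounds} (with equality case \eqref{robPCSK}), bound the numerator via the spectral-gap event \eqref{rreq}, cancel to get the ratio $\frac{k}{k-2\sqrt{k-1}-\epsilon}\cdot\frac{n}{n-1}$, and absorb the vanishing factor $\frac{n}{n-1}$ into the additive $\epsilon$ by a finite-$n$ limit argument. Your explicit treatment of the event containment and the odd-$k$ parity subsequence is slightly more careful than the paper's write-up, but the substance is identical.
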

\begin{proof}
Using the lower bound in \eqref{boundseq} and \eqref{robPCSK}, for any undirected graph $\mathcal{G}_n$ with $n$ nodes and average degree $\tilde{d}(\mathcal{G}_n)=k$,
\begin{equation}
\label{randregt2}
\min\limits_{\mathcal{G}_n : \tilde{d}(\mathcal{G}_n)=k}\mathcal{H}^*(\mathcal{G}_n) \geq \frac{(n-1)^2}{2kn^2}
\end{equation}
Using \eqref{randregt2} with \eqref{rreq}, for any random $k$-regular graph with $k \geq 3$ and $\epsilon \in (0,k-2\sqrt{k-1})$, 
\begin{equation} 
\lim\limits_{n \to \infty} \Pr \left \{ \frac{\mathcal{H}^*(\mathcal{G}_{n,k})}{\min\limits_{\mathcal{G}_n : \tilde{d}(\mathcal{G}_n)=k}\mathcal{H}^*(\mathcal{G}_n)}\leq \frac{2kn^2}{(2n^2-2n)(k-2\sqrt{k-1}-\epsilon)}\right \}  =1. 
\label{rand-regt3}
\end{equation}
Note that the upper bound in \eqref{rand-regt3} satisfies
\begin{equation} 
\lim\limits_{n \to \infty}  \frac{2kn^2}{(2n^2-2n)(k-2\sqrt{k-1}-\epsilon)}= \frac{k}{k-2\sqrt{k-1}-\epsilon}.
\label{rand-regt4a}
\end{equation}

Due to the definition of limit, for any $\epsilon'>0$ there exists some $n' \in \mathbb{N}$ such that 
\begin{equation} 
\left| \frac{2kn^2}{(2n^2-2n)(k-2\sqrt{k-1}-\epsilon)} - \frac{k}{k-2\sqrt{k-1}-\epsilon} \right | < \epsilon', \forall n > n'.
\label{rl2}
\end{equation}
Note that for any $k \geq 3$, $\epsilon \in (0,k-2\sqrt{k-1})$, and $n\geq2$,
\begin{equation} 
\frac{2kn^2}{(2n^2-2n)(k-2\sqrt{k-1}-\epsilon)}  \geq \frac{k}{k-2\sqrt{k-1}-\epsilon}.
\label{rl3}
\end{equation}
Using \eqref{rl2} and \eqref{rl3}, we get
\begin{equation} 
\frac{2kn^2}{(2n^2-2n)(k-2\sqrt{k-1}-\epsilon)}  < \frac{k}{k-2\sqrt{k-1}-\epsilon} + \epsilon', \forall n > n'.
\label{rand-regt4}
\end{equation}
Without loss of generality, we can pick $\epsilon'=\epsilon$ and \eqref{rand-regt4} would hold for the corresponding $n' \in \mathbb{N}$. Accordingly, we can obtain \eqref{rand-regt1} from \eqref{rand-regt3} and \eqref{rand-regt4}.

\end{proof}
 In light of Theorem \ref{rand-regt}, random $k$-regular graphs with sufficiently large degree $k$ and size $n$ typically have bounded suboptimality in their structural robustness when compared to the best graph of size $n$ and average degree $k$. Fig. \ref{fig-approx} illustrates how the approximation bound in \eqref{rand-regt1} changes as a function of $k$. As shown in this figure, the approximation bound starts around 17.5 for $k=3$, rapidly drops to 5 by $k=5$ and to 2 by $k=15$, and then keeps approaching one as $k$ increases.  Accordingly, random $k$-regular graphs with $n$ nodes are typically very good approximate solutions to the problem of optimizing structural robustness subject to a sparsity constraint $\tilde{d}(\mathcal{G}_n)=k$ for sufficiently large values of $n$ and $k$  .  We complement Theorem \ref{rand-regt} by showing that the  structural vulnerability  of random $k$-regular graphs is typically within a bounded proximity of the complete graph's structural vulnerability for sufficiently large $k$ and $n$.

\begin{figure}
\begin{center}
\includegraphics[trim = 0mm 0mm 0mm 0mm,clip,scale=0.34]{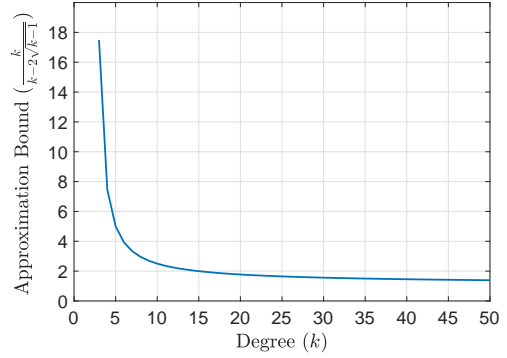}
\caption{Approximation bound in \eqref{rand-regt1}, which bounds the ratio of the  structural vulnerability  of random $k$-regular graphs to the smallest possible value among the graphs with same size and average degree $k$, is shown as a function of $k$. 
}
\label{fig-approx}
\end{center}
\end{figure}

\begin{theorem}   \label{rand-reglem} For any constant $\alpha \geq 1$ and any $\epsilon>0$, 
\begin{equation}
\lim_{n \to \infty} \Pr \left \{\frac{\mathcal{H}^*(\mathcal{G}_{n,k})}{\mathcal{H}^*(\mathcal{K}_n)} \leq\alpha+\epsilon \right \}  =1, \; \forall k \geq \frac{n-1}{\alpha},
\label{rand-reg1}
\end{equation}
where $\mathcal{K}_n$ is the complete graph and $\mathcal{G}_{n,k}$ is a random $k$-regular graph. 
\end{theorem}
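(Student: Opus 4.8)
The plan is to reduce the claimed bound on the ratio $\mathcal{H}^*(\mathcal{G}_{n,k})/\mathcal{H}^*(\mathcal{K}_n)$ to a single lower bound on the algebraic connectivity $\lambda_2(L)$ of the random regular graph, and then supply that spectral bound probabilistically. Using the closed form $\mathcal{H}^*(\mathcal{K}_n)=(n-1)/(2n^2)$ from \eqref{robPCSK} together with $\mathcal{H}^*(\mathcal{G})=\tfrac{1}{2n}\sum_{i=2}^n\tfrac{1}{\lambda_i(L)}\leq \tfrac{n-1}{2n\lambda_2(L)}$ from \eqref{hseig} and \eqref{hseig2}, one obtains
\[
\frac{\mathcal{H}^*(\mathcal{G}_{n,k})}{\mathcal{H}^*(\mathcal{K}_n)}\leq \frac{n}{\lambda_2(L)}.
\]
Hence it suffices to prove that, for every $k\geq (n-1)/\alpha$, one has $\lambda_2(L)\geq n/(\alpha+\epsilon)$ with probability tending to one, since this immediately forces the ratio below $\alpha+\epsilon$ and gives \eqref{rand-reg1}.

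First I would translate this target into a statement about the adjacency spectrum. For a $k$-regular graph the Laplacian factors as $L=kI-A$, where $A$ is the adjacency matrix, so $\lambda_2(L)=k-\mu_2(A)$ with $\mu_2(A)$ the second largest adjacency eigenvalue. The required inequality $\lambda_2(L)\geq n/(\alpha+\epsilon)$ is therefore equivalent to $\mu_2(A)\leq k-n/(\alpha+\epsilon)$. Since $k\geq (n-1)/\alpha$, the right-hand side is at least
\[
\frac{n-1}{\alpha}-\frac{n}{\alpha+\epsilon}=\frac{n\epsilon-(\alpha+\epsilon)}{\alpha(\alpha+\epsilon)},
\]
which grows linearly in $n$ with a strictly positive leading coefficient whenever $\epsilon>0$. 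Thus the claim reduces to a second-eigenvalue bound of the mild form $\mu_2(A)=o(n)$. Second, I would invoke high-probability control of $\mu_2(A)$ for random regular graphs in this growing-degree regime, namely $\mu_2(A)=\mathcal{O}(\sqrt{k})$ with probability tending to one. Because $k\leq n-1$ forces $\sqrt{k}=\mathcal{O}(\sqrt{n})=o(n)$, the event $\mu_2(A)\leq \tfrac{n\epsilon-(\alpha+\epsilon)}{\alpha(\alpha+\epsilon)}$ holds for all sufficiently large $n$ with probability tending to one, and the reduction above then yields the theorem.

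The main obstacle is precisely this last probabilistic input. The estimate \eqref{rreq} used in Theorem \ref{rand-regt} is Friedman's bound, which is stated for \emph{fixed} degree $k$ as $n\to\infty$; here $k\geq (n-1)/\alpha$ grows linearly with $n$, so \eqref{rreq} does not apply. I would instead rely on a concentration result for $\mu_2(A)$ valid for degrees growing with $n$ and covering the whole range $(n-1)/\alpha\leq k\leq n-1$, for instance an $\mathcal{O}(\sqrt{k})$-type bound for dense random regular graphs, passing through the complement graph in the very dense part $k>n/2$ (where the relevant small parameter is the complement degree $n-1-k$). A secondary technical point is the reading of the quantifier ``$\forall k$'': if it is a single growing sequence $k=k(n)$, the argument above applies verbatim; if it is meant uniformly over all admissible $k$, then the per-$k$ failure probabilities must be summed over the at most $n$ candidate values, so one needs the spectral bound to fail with probability $o(1/n)$, which the sharper growing-degree eigenvalue results indeed provide.
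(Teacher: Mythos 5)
Your proposal is correct, and its deterministic reduction runs parallel to the paper's: both divide the bound $\mathcal{H}^*(\mathcal{G}_{n,k}) \leq \frac{n-1}{2n\lambda_2(L)}$ from \eqref{hseig2} by $\mathcal{H}^*(\mathcal{K}_n) = \frac{n-1}{2n^2}$ from \eqref{robPCSK}, so that everything hinges on a high-probability lower bound on the algebraic connectivity. Where you genuinely diverge is in the probabilistic input, and your objection there identifies a real soft spot in the paper's own argument. The paper proves the theorem by noting that $k-2\sqrt{k-1}-\epsilon$ is increasing in $k$ and then plugging $k=(n-1)/\alpha$ into \eqref{rreq}; but \eqref{rreq} rests on Friedman's theorem, which is a fixed-$k$, $n\to\infty$ statement, whereas any $k \geq (n-1)/\alpha$ necessarily grows linearly with $n$, so the paper is invoking \eqref{rreq} outside the regime in which it was established. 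Your replacement --- writing $\lambda_2(L) = k - \mu_2(A)$ and observing that the conclusion needs only $\mu_2(A) = o(n)$, which is far weaker than the $\mathcal{O}(\sqrt{k})$ estimates known for random regular graphs of growing degree (e.g., Krivelevich--Sudakov--Vu--Wormald, Cook--Goldstein--Johnson, Tikhomirov--Youssef, with the complement argument covering $k > n/2$) --- is exactly the right repair, and it makes the argument robust to the precise constants in whichever spectral estimate is cited. Your reading of the quantifier is also the natural one: the theorem asserts the limit for each admissible sequence $k(n)$, which your per-sequence argument handles, and the union-bound variant is a safeguard rather than a necessity. The one item to finish before this is a complete proof is to fix a specific growing-degree spectral reference and check that its stated degree range, together with complementation, covers the entire window $(n-1)/\alpha \leq k \leq n-1$, since that single citation carries all of the probabilistic weight; with that in place, your argument is not only valid but strictly more rigorous than the proof given in the paper.
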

\begin{proof}

Note that the denominator of the upper bound in \eqref{rreq} is strictly increasing in $k$ since
\begin{equation}
    \dfrac{d(k-2\sqrt{k-1}-\epsilon)}{dk}=1-\dfrac{1}{\sqrt{k-1}}>0, \; \forall k \geq 3.
\end{equation}
Accordingly, for ${k \geq (n-1)/\alpha}$, we can plug the smallest possible value of $k$ into the upper bound in \eqref{rreq} and obtain 

\begin{equation}
\lim_{n \to \infty} \Pr \left \{\mathcal{H}^*(\mathcal{G}_{n,k}) \leq  \dfrac{n-1} {2n \left ( \dfrac{n-1}{\alpha} -2\sqrt{\dfrac{n-1-\alpha}{\alpha}}-\epsilon\right )} \right \}  =1,
\label{rand-reg2}
\end{equation}
for any $\epsilon \in (0,k-2\sqrt{k-1})$. 
Using \eqref{rand-reg2} together with \eqref{robPCSK}, and without loss of generality setting $\epsilon=0.1$, which is in $(0,k-2\sqrt{k-1})$ for all $k\geq 3$, we have
\begin{equation}
\lim_{n \to \infty} \Pr \left \{\dfrac{\mathcal{H}^*(\mathcal{G}_{n,k})}{\mathcal{H}^*(\mathcal{K}_n)} \leq \dfrac{n} { \dfrac{n-1}{\alpha} -2\sqrt{\dfrac{n-1-\alpha}{\alpha}}-0.1} \right \}  =1.
\label{rand-reg3}
\end{equation}
Note that the upper bound in \eqref{rand-reg3} approaches $\alpha$ as $n \to \infty$. Hence, for any $\epsilon>0$ there is a sufficiently large value of $n$ such that, the upper bound is smaller than $\alpha +\epsilon$. Accordingly, we obtain \eqref{rand-reg1}.
\end{proof}

Theorem \ref{rand-reglem} implies that the random regular graphs can approach the fundamental limit in \eqref{gn-hcomp-cor} on  $\mathcal{H}^*(\mathcal{G}_n)/\mathcal{H}^*(\mathcal{K}_n)$ imposed by the sparsity of $\mathcal{G}_n$. For example, for any constant $\alpha \geq 1$ and even number of nodes $n$ such that $n \geq 3\alpha +1$, let $\mathcal{G}_{n,k^*}$ be a random $k^*$-regular graph, where
\begin{equation}
    \label{kstar}
    k^*=\left \lceil \frac{n-1}{\alpha} \right \rceil .
\end{equation}
 For such random regular graphs, as $n$ increases,  $\mathcal{H}^*(\mathcal{G}_{n,k^*})/\mathcal{H}^*(\mathcal{K}_n)$ is upper bounded by $\alpha$ with a very high probability due to \eqref{rand-reg1}. Furthermore, $\mathcal{G}_{n,k^*}$ has an average degree of $k^*$ that is equal or very close to the minimum required value of $(n-1)/\alpha$ as given in \eqref{gn-hcomp-cor}.

\section{Simulation Results}
\label{sims}

We simulate the noisy consensus dynamics in \eqref{consensus2}, where $\xi(t) \in \mathbb{R}^n$ is white Gaussian noise with zero mean and unit covariance, on different networks with uniform edge weights $w=\bold{1}$ to demonstrate their structural robustness. In each simulation, the network is initialized at $x(0) = \bold{0}$.

In the first set of simulations, we consider the path, star, random $3$-regular, and complete graphs. We generate the random regular graphs using the distributed algorithm in \cite{Yasin15TNSE}.  We aim to numerically illustrate how the structural robustness of these graphs compare to each other and change with increasing network size. For each type we generate three networks of different sizes: $n=20$, $n=40$, and $n=60$. The resulting state variances over time on the networks with $n=60$ are shown in Fig.~\ref{fig-60}. In Table~\ref{Tab-PSRC}, for each of these networks we provide the average of state variance over the simulation horizon and the theoretical value of structural vulnerability, which is computed using the Laplacian eigenvalues as per \eqref{hseig}. For the path, star, and complete graphs, the empirical values can also be verified using \eqref{robPCSK}. For the random $3$-regular graphs, the average distances are computed as $2.62$ ($n=20$), $3.62$ ($n=40$), and $4.09$ ($n=60$). Using the average distances together with the average degrees, the lower and upper bounds in \eqref{boundseq} are computed as $0.15$ and $0.62$ ($\mathcal{G}_{20,3}$), $0.158$ and $0.882$ ($\mathcal{G}_{40,3}$), $0.161$ and $1.005$ ($\mathcal{G}_{60,3}$). For each random 3-regular graph, the observed average state variance is inside the corresponding interval, closer to the lower bound.  
\begin{figure}
\begin{center}
\includegraphics[trim = 0mm 0mm 0mm 0mm,clip,scale=0.43]{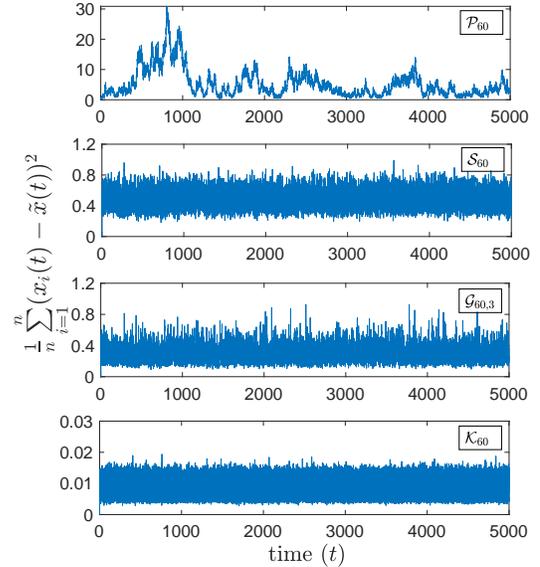}
\vspace*{-5mm} 
\caption{Variance of states under the noisy consensus dynamics  on path ($\mathcal{P}_{60}$),  star ($\mathcal{S}_{60}$), random $3$-regular ($\mathcal{G}_{60,3}$), and complete ($\mathcal{K}_{60}$) graphs with 60 nodes. The edge weights are all set to one to illustrate structural robustness.
}
\label{fig-60}
\end{center}
\end{figure}

\begin{table}\centering
{\renewcommand{\arraystretch}{1}\footnotesize
\begin{center}
 \begin{tabular}{|c |c |c |c|} 
 \cline{2-4}
 \multicolumn{1}{c|}{}
 & $n=20$ & $n=40$ & $n=60$ \\ [0.5ex] 
 \hline
 \multirow{2}{4em}{\centering Path } & 1.662 & 3.32 & 5.21\\ 
& \bf{1.663} & \bf{3.33} & \bf{4.99} \\ 
 \hline
 \multirow{2}{4em}{\centering Star} & 0.45 & 0.478 & 0.485\\ 
& \bf{0.45} & \bf{0.475} & \bf{0.483}\\ 
 \hline 
  \multirow{2}{4em}{\centering Random 3-regular} & 0.239 & 0.286 & 0.305\\ 
& \bf{0.237} & \bf{0.287} & \bf{0.305} \\ 
 \hline
 \multirow{2}{4em}{\centering Complete} & 0.024 & 0.0124 & 0.0085\\ 
& \bf{0.024} & \bf{0.0122} & \bf{0.0082}\\ 
 \hline 
\end{tabular}
\end{center}
}
\caption{Average of state variances over time and the value of  structural vulnerability  as per \eqref{hseig} (bold) for the path, star, random 3-regular, and complete graphs of sizes 20,40, and 60.}
  \label{Tab-PSRC}
\end{table}

In the second set of simulations, we aim to illustrate how the structural robustness of random $k$-regular graphs with $k$ as in \eqref{kstar} change with increasing network size for a given $\alpha \geq 1$. As such, we investigate the performance of such graphs as an approximate solution to the combinatorial problem of designing a network with minimum sparsity that has  $\mathcal{H}^*(\mathcal{G}_n) \leq \alpha \mathcal{H}^*(\mathcal{K}_n)$. For this simulation we pick $\alpha=25$ and set $k$ as per \eqref{kstar} for four different sizes: $n=100$, $n=150$, $n=200$, and $n=250$. Accordingly, we simulate the noisy consensus dynamics on the random regular graphs $\mathcal{G}_{100,4}$, $\mathcal{G}_{150,6}$, $\mathcal{G}_{200,8}$, and $\mathcal{G}_{250,10}$. 
The average of state variances over the simulation horizon were observed as 
$0.1818$ ($\mathcal{G}_{100,4}$), $0.1015$ ($\mathcal{G}_{150,6}$), $0.0717$ ($\mathcal{G}_{200,8}$), and $0.0556$ ($\mathcal{G}_{250,10}$). 
In Table \ref{Tab-RR}, we provide the theoretical values of $\mathcal{H}^*(\mathcal{G}_{n,k})$ and $\mathcal{H}^*(\mathcal{K}_n)$, which are computed using the Laplacian eigenvalues of graphs as per \eqref{hseig}. We also provide their ratios, $\mathcal{H}^*(\mathcal{G}_{n,k})/\mathcal{H}^*(\mathcal{K}_n)$, in the last row of this table. The ratio starts at 36.3 for $n=100$ and monotonically drops to $27.8$ by $n=250$. These results indicate that $\mathcal{H}^*(\mathcal{G}_{n,k})/\mathcal{H}^*(\mathcal{K}_n)$ is approaching $\alpha$ in accordance with Theorem \ref{rand-reglem}. Hence, such random $k$-regular graphs with $k$ as per \eqref{kstar} approximately maintain the required level of robustness with the minimum average degree possible as shown in \eqref{gn-hcomp-cor}.

\begin{table}\centering
{\renewcommand{\arraystretch}{1}\footnotesize
\begin{center}
 \begin{tabular}{|c |c |c |c| c|} 
 \cline{2-5}
\multicolumn{1}{c|}{}
 
 & $n=100$ & $n=150$ & $n=200$ & $n=250$ \\  \multicolumn{1}{c|}{}& $k=4$ & $k=6$ & $k=8$ & $k=10$\\
 \hline
 
  \multirow{2}{4em}{\centering Random $k$-regular} & \multirow{2}{3em}{\centering 0.1813} & \multirow{2}{3em}{\centering 0.1013} & \multirow{2}{3em}{\centering 0.0716}& \multirow{2}{3em}{\centering 0.0555}\\ 
&  &  & &\\

 \hline
 \multirow{2}{4em}{\centering Complete} &  \multirow{2}{3em}{\centering 0.005} & \multirow{2}{3em}{\centering 0.0033} & \multirow{2}{3em}{\centering 0.0025}& \multirow{2}{3em}{\centering 0.002}\\ 
&  &  & &\\ 
 \hline 
  \multirow{2}{4em}{\centering Ratio } &  \multirow{2}{3em}{\centering 36.3} & \multirow{2}{3em}{\centering 30.7} & \multirow{2}{3em}{\centering 28.6}& \multirow{2}{3em}{\centering 27.8}\\ 
&  &  & &\\ 
 \hline 
\end{tabular}
\end{center}
}
\caption{structural vulnerability of the random $k$-regular graphs and the complete graphs of size $n$ as per \eqref{hseig} and their ratios.}
  \label{Tab-RR}
\end{table}

\section{Conclusion}
\label{conclusion}

We investigated the structural robustness of undirected linear consensus networks to noisy interactions. We measured the structural robustness of a graph based on the smallest possible value of the expected steady state population variance of states under the noisy consensus dynamics with admissible edge weights in $(0,1]$. We showed that the average distance and the average node degree in the underlying graph define tight bounds on the structural robustness. Using these novel bounds, we also presented some fundamental graph topological limitations on structural robustness and we investigated the graphs with extremal robustness properties.

As a future direction, we intend to extend our robustness analysis to the generalized case of directed graphs, where the interactions between nodes are not necessarily symmetric. We also plan to investigate the fundamental trade-offs between the proposed measure of structural robustness and other system properties. For example, recently it was shown that the distances between the nodes have a major impact on the controllability of consensus networks and there are trade-offs between the controllability and robustness of such systems (e.g., \cite{Yasin16TAC, Abbas19}). We believe that the results in this paper can be used for further investigation of such relationships between important system properties.

\bibliography{MyReferences}

\end{document}